\documentclass[twocolumn]{article}
\usepackage[utf8]{inputenc}
\usepackage[margin=0.5in]{geometry}
\setlength{\columnsep}{7mm} %
\usepackage[small]{titlesec}
\usepackage{cite}
\usepackage{amsmath,amssymb,amsfonts}
\usepackage{algorithmic}
\usepackage{graphicx}
\graphicspath{{figures/}}
\usepackage{textcomp}
\usepackage{url}
\usepackage{braket}
\usepackage{amsthm}
\newcommand{\Titlestring}{Exploiting Symmetry Reduces the Cost of Training QAOA}

\newcommand{\rev}[1]{{#1}}
\usepackage{booktabs}
\usepackage{multirow}

\newtheorem{theorem}{Theorem}

\newtheorem*{remark}{Remark}

\newcommand{\Gate}[1]{\textsc{#1}}

\newcommand{\zgate}{\Gate{Z}}
\newcommand{\ygate}{\Gate{Y}}
\newcommand{\xgate}{\Gate{X}}

\newcommand{\idgate}{\Gate{I}}

\DeclareMathOperator*{\argmax}{arg\,max}

\usepackage{hyperref}
\usepackage{enumerate}

\title{\Titlestring}
\author{%
    Ruslan Shaydulin$^{*}$ %
    \and Stefan M.\ Wild$^{*}$ %
    }
\date{$^{*}${\small Mathematics and Computer Science Division, Argonne National Laboratory, Lemont, IL 60439, USA}}

\begin{document}

\twocolumn[
  \begin{@twocolumnfalse}
\maketitle
\begin{abstract}
A promising approach to the practical application of  the Quantum Approximate Optimization Algorithm (QAOA) is finding QAOA parameters classically in simulation and sampling the solutions from QAOA with optimized parameters on a quantum computer. Doing so requires repeated evaluations of QAOA energy in simulation. We propose a novel approach for accelerating the evaluation of QAOA energy by leveraging the symmetry of the problem. We show a connection between classical symmetries of the objective function and the symmetries of the terms of the cost Hamiltonian with respect to the QAOA energy. We show how by  considering only the terms that are not connected by symmetry, we can significantly reduce the cost of evaluating the QAOA energy. Our approach is general and applies to any known subgroup of symmetries and is not limited to graph problems. Our results are directly applicable to nonlocal QAOA generalization RQAOA. We outline how available fast graph automorphism solvers can be leveraged for computing the symmetries of the problem in practice. We implement the proposed approach on the MaxCut problem using a state-of-the-art tensor network simulator and a graph automorphism solver on a benchmark of 48 graphs with up to 10,000 nodes. Our approach provides an improvement for $p=1$ on \rev{$71.7\%$} of the graphs considered, with a median speedup of \rev{$4.06$}, on a benchmark where $62.5\%$ of the graphs are known to be hard for automorphism solvers. 
\end{abstract}
\vspace{0.5in}
  \end{@twocolumnfalse}
]

\section{Introduction}

Recent advances in quantum hardware~\cite{Arute2019,ibmqv64} pave the way for demonstrating useful quantum advantage in the coming years. A leading candidate for demonstrating such an advantage is the Quantum Approximate Optimization Algorithm (QAOA)~\cite{farhi2014quantum}, which has been implemented on hardware using up to 23 qubits~\cite{googleqaoaexperimental}. An important challenge associated with practical application of QAOA is the need to find good QAOA parameters, known to be a difficult task in practice~\cite{Shaydulin2019MultistartDOI,Shaydulin2019EvaluatingDOI}. This difficulty is exacerbated by the high cost of obtaining such parameters variationally on a quantum computer because of the long sampling times on quantum hardware implementations and the large number of samples required for accurate energy evaluation~\cite{karalekas2020optimizedforvariational,ushijima2019multilevel}. Although a number of machine learning approaches have been proposed~\cite{verdon2019learning, khairy2019learning}, they only partially address this problem, since training the machine learning models requires large datasets, which are both expensive to obtain from quantum devices and not standardized.

A promising approach to overcoming the challenge of finding good QAOA parameters is finding them purely classically by leveraging the locality of the QAOA algorithm and the recent advances in tensor network methods~\cite{yuriQAOAsim,huang2019alibabacloud} to efficiently (in the number of qubits) evaluate the QAOA energy. In this approach, the value of QAOA energy for a Hamiltonian encoding the target problem with given parameters is evaluated one term at a time; for each term, only the corresponding reverse causal cone is considered~\cite{farhi2014quantum,Streif2020training}. Since the size of the reverse causal cone grows only with QAOA circuit depth, but not with the overall size of the problem, this approach allows one to calculate energies for formally infinitely sized problems. Then a classical optimizer can be used to find QAOA parameters by maximizing the QAOA energy. Although the main motivating application for such advances is finding QAOA parameters, fast evaluation of QAOA energy also has important applications to graph analysis~\cite{Szegedy2019qaoaenergies} and to understanding QAOA potential.

In this paper we introduce a novel method that accelerates the evaluation of QAOA energy by leveraging the symmetries of a problem instance. Classical symmetries of the objective function lead to symmetries of the cost Hamiltonian. 
We show that these symmetries introduce classes of equivalence on the terms of the cost Hamiltonian with regard to QAOA energy, making it sufficient to evaluate only the energy of one term from each equivalence class. For some problems, such symmetries of the objective function may be known a priori or can be computed by using state-of-the-art graph automorphism solvers, which \rev{we demonstrate to be} fast compared with the cost of evaluating the energy. We note that our results apply to any subgroup of symmetries of the problem. Moreover, our results are applicable not only to QAOA but also to its nonlocal generalization RQAOA~\cite{braviyobstacles,braviyobstaclesarxiv,bravyihybrid} because of the need to optimize parameters and evaluate the energy of each term to compute correlations. 

We implement this approach numerically for the MaxCut problem using the state-of-the-art tensor network simulator QTensor~\cite{yuriQAOAsim,qtensor} and using \texttt{nauty}~\cite{McKay201494} for computing the automorphism group of the graph. To illustrate the worst-case performance of the proposed methods, we consider graphs known to be hard for the automorphisms solvers we use, and we show that even in this scenario we achieve a median speedup of \rev{$1.79$} for a single energy evaluation and $p=1$. For graphs that are considered easy for graph automorphism solvers, we observe \rev{a median speedup of $8.96$ and a mean speedup of over $2726$}. In practice, the benefits of using the proposed approach are expected to be larger, since energy is typically evaluated repeatedly (e.g., while optimizing QAOA parameters), and the cost of doing so grows with $p$, and the symmetries are computed only once.

\rev{While the results in this paper are based on the ideas presented in \cite{shaydulinsymm}, our results are distinct in the following three ways. First, we focus on the symmetries of the terms of the cost Hamiltonian and the corresponding redundancies in QAOA energy evaluation, rather than the symmetries of the quantum state and the corresponding symmetries in QAOA outcome probabilities. Second, we demonstrate a novel application of the symmetry analysis to accelerating the training of QAOA purely classically. Third, by leveraging the high-performance tensor network simulator QTensor~\cite{qtensor,yuriQAOAsim}, we are able to consider instances with up to 10,000 nodes / qubits, whereas \cite{shaydulinsymm} considered instances with no more than 22 nodes / qubits.}

The rest of the paper is organized as follows. We review relevant background in Sec.~\ref{sec:background}. Our main result connecting the symmetry of the objective function to the classes of equivalence on the terms of the cost Hamiltonian is presented in Sec.~\ref{sec:symmacceleratesqaoatraining}. The results in %
that section are general and are not limited to graph problems.
In Sec.~\ref{sec:graphaut4symmtopsection} we outline how available fast graph symmetry solvers can be used to accelerate QAOA energy computation. In Sec.~\ref{sec:experiments} we present the numerical results. In Sec.~\ref{sec:discussion} we summarize our conclusions and briefly discuss future work.

\section{Background}\label{sec:background}
We briefly review relevant concepts and point the interested reader to further discussions in the literature.

\subsection{Binary optimization on quantum computers} Consider a function on the Boolean cube, $f:\{0,1\}^n \rightarrow \mathbb{R}$,
and the corresponding optimization problem
\begin{equation}\label{eq:opt_problem}
    \max_{x\in \{0,1\}^n}f(x).
\end{equation}

A Hamiltonian $H\in \mathbb{C}^{2^n\times 2^n}$ is said to faithfully represent the function $f$ if 
\begin{equation}\label{eq:ham_cond}
H\ket{x} = f(x)\ket{x}, \quad \forall x\in \{0,1\}^n.    
\end{equation}

For clarity of presentation, we consider only functions that can be represented as $f(x) = \sum_{i}w_if_i(x)$, $w_i\in\mathbb{R}$, where $f_i:\{0,1\}^n \rightarrow \{0,1\}$ are Boolean functions. For such functions the unique Hamiltonian on $n$ qubits satisfying \eqref{eq:ham_cond} is of the form
\begin{equation}\label{eq:ham_form}
H = H(f) = \sum_{S\subset [n]}\hat{f}(S)\prod_{j\in S}\zgate_j,
\end{equation}
where $\zgate_j$ is a Pauli $\zgate$ operator acting on qubit $j$ and coefficients $\hat{f}(S)$ are given by the Fourier expansion of $f$~\cite{hadfieldrepresentation}. Many objective functions of practical interest are given as Boolean polynomials, with Hamiltonians of the form \eqref{eq:ham_form} following trivially by substituting $x\in \{0,1\} \rightarrow \frac{\idgate - \zgate}{2}$. For a detailed discussion, the reader is referred to \cite{hadfieldrepresentation}.

\subsection{Quantum Approximate Optimization Algorithm} The Quantum Approximate Optimization Algorithm~\cite{farhi2014quantum} is a hybrid quantum-classical algorithm that combines a parameterized quantum evolution with a method for finding good parameters. Applied to the problem \eqref{eq:opt_problem}, QAOA prepares a parameterized quantum state 
\[
\ket{\vec{\beta}, \vec{\gamma}}_p :=
U_B(\beta_p)U_P(\gamma_p)\cdots U_B(\beta_1)U_P(\gamma_1)\ket{s},
\]
where $U_P(\gamma) = e^{-i\gamma H}$ is the phase operator with $H$ encoding the objective function $f$ according to \eqref{eq:ham_cond}, 
$$U_B(\beta)=e^{-i\beta B}=e^{-i\beta\sum \limits_{j=1}^n \xgate_j} = \prod_{j=1}^ne^{-i\beta\xgate_j}$$ 
is the mixing operator with $\xgate_j$ denoting the Pauli $\xgate$ acting on qubit $j$, and $\ket{s}=\ket{+}^{\otimes n}$ is a uniform superposition over computational basis states. 

The parameters $\vec{\beta}$ and  $\vec{\gamma}$ have to be chosen in such a way that, upon measuring the state $\ket{\vec{\beta}, \vec{\gamma}}_p$, the solution of desired quality to the original problem \eqref{eq:opt_problem} is retrieved with high probability. QAOA parameters are typically found variationally by using an outer-loop classical optimizer to approximately maximize $E_p(\vec{\beta}, \vec{\gamma}) = \bra{\vec{\beta}, \vec{\gamma}}_p H\ket{\vec{\beta}, \vec{\gamma}}_p$, with the value of $E_p(\vec{\beta}, \vec{\gamma})$  (commonly referred to as QAOA energy) evaluated either from sampling a quantum computer or purely classically as described below, 
although many alternative approaches have been proposed~\cite{farhi2014quantum,brandao2018fixed-rs,khairy2019learning,verdon2019learning}.

\subsection{Classical training of QAOA} \label{sec:classicalqaoaenergy}

Finding QAOA parameters is typically done by using a classical optimizer to approximately maximize the expectation value of the cost Hamiltonian in QAOA state:
\[
\vec{\beta}^*, \vec{\gamma}^* = \argmax E_p(\vec{\beta}, \vec{\gamma}).
\]

Doing so requires repeated evaluations of $E_p(\vec{\beta}, \vec{\gamma})$. For many classes, this value can be evaluated purely classically with resource requirements growing only with the density of the problem and the number, $p$, of QAOA steps and not the number, $n$, of qubits (i.e., variables in the problem)~\cite{farhi2014quantum}. Here we briefly review the mechanism for doing so; for more details and an in-depth discussion, the reader is referred to \cite{Streif2020training,brandao2018fixed-rs}.

Consider a Hamiltonian of the form \eqref{eq:ham_form}. Then the value $E_p(\vec{\beta}, \vec{\gamma})$ can be written as
\begin{align}
    E_p(\vec{\beta}, \vec{\gamma}) & = \bra{\vec{\beta}, \vec{\gamma}}_p H\ket{\vec{\beta}, \vec{\gamma}}_p \nonumber \\
    & = \bra{\vec{\beta}, \vec{\gamma}}_p \sum_{S\subset [n]}\hat{f}(S)\prod_{j\in S}\zgate_j \ket{\vec{\beta}, \vec{\gamma}}_p  \nonumber \\
    & = \sum_{S\subset [n]}\hat{f}(S) \bra{\vec{\beta}, \vec{\gamma}}_p\prod_{j\in S}\zgate_j\ket{\vec{\beta}, \vec{\gamma}}_p \\
    & = \sum_{S\subset [n]}\hat{f}(S) E_p(\vec{\beta}, \vec{\gamma}, S). \nonumber
\end{align}

By linearity,  each term $E_p(\vec{\beta}, \vec{\gamma}, S)$ (corresponding to observable $\prod_{j\in S}\zgate_j$) can be evaluated independently. To see that each term $E_p(\vec{\beta}, \vec{\gamma}, S)$ can be evaluated with resource requirements independent of $n$, note that 
\begin{align*}
    E_p(\vec{\beta}, \vec{\gamma}, S) =& \bra{\vec{\beta}, \vec{\gamma}}_p\prod_{j\in S}\zgate_j\ket{\vec{\beta}, \vec{\gamma}}_p \\ 
     =& \bra{s}U_P^\dagger(\gamma_1)U_B^\dagger(\beta_1)\cdots \\
    & \cdots U_P^\dagger(\gamma_p)\underbrace{U_B^\dagger(\beta_p) \prod_{j\in S}\zgate_j U_B(\beta_p)}_{O(S)}U_P(\gamma_p)\cdots \\
    & \cdots U_B(\beta_1)U_P(\gamma_1)\ket{s}.
\end{align*}

Now consider $O(S)$. Since  $U_B(\beta_p) = \prod_{j=1}^ne^{-i\beta_p\xgate_j}$, all factors $e^{-i\beta_p\xgate_j}$ with $j\notin S$ will commute through $\prod_{j\in S}\zgate_j$ and cancel out, resulting in operator $O(S)$  having support only on $|S|$ qubits; that is,
\[
O(S) = \prod_{j\in S}e^{i\beta\xgate_j}\prod_{j\in S}\zgate_j\prod_{j\in S}e^{-i\beta\xgate_j}.
\]

From \eqref{eq:ham_form}, the phase separating operator is \begin{align*}
    U_P(\gamma) & = e^{-i\gamma H} \\
    & = e^{-i\gamma \sum \limits_{\tilde{S}\subset [n]}\hat{f}(\tilde{S})\prod \limits_{j\in \tilde{S}}\zgate_j} \\
    & = \prod_{\tilde{S}\subset [n]} e^{-i\gamma \hat{f}(\tilde{S}) \prod_{j\in \tilde{S}}\zgate_j}.
\end{align*}

Analogously to the previous step, whenever $\tilde{S}\cap S = \emptyset$, the factor $e^{-i\gamma \hat{f}(\tilde{S}) \prod_{j\in \tilde{S}}\zgate_j}$ acting on qubits $\tilde{S}$ will commute through and cancel out. Therefore the operator after one step of QAOA  has support only on the set of qubits 
$$\rev{S_{p=1}} = S\cup \{k \in \tilde{S} : \tilde{S}\cap S \neq \emptyset \}.$$ 

Note that the support of the operator after one step of QAOA (\rev{$|S_{p=1}|$ qubits}) does not depend on the overall size of the problem\rev{, $n$, explicitly. Rather, it depends only on} the density of the problem. 
Repeating the above steps, at each step the support of the operator (and thus the complexity of evaluating $E_p(\vec{\beta}, \vec{\gamma}, S)$) grows by including more interacting qubits that are within the \textit{reverse causal cone} of $\prod_{j\in S}\zgate_j$.

For dense instances (e.g.,  where  a term connects any two variables), this approach yields no gains \rev{because the reverse causal cone envelops the entire circuit. Specifically, the applicability of the method is limited by the size of the largest reverse causal cone of any term in the Hamiltonian}. However, for relatively sparse problems with quadratic terms ($|S|=2$, such as MaxCut), this technique has been demonstrated to enable evaluating $E_p(\vec{\beta}, \vec{\gamma})$ and finding good QAOA parameters for instances with hundreds to thousands of nodes with $p\leq 4$~\cite{yuriQAOAsim,huang2019alibabacloud}.

\subsection{MaxCut}

For a graph $G=(V,E)$, the goal of the MaxCut problem is to partition the set of nodes $V$ into two parts such that the number of edges spanning multiple parts is maximized. MaxCut on general graphs is APX-complete~\cite{papadimitriou1991optimization}. The MaxCut objective is encoded by the Hamiltonian
\[
H_{\text{MaxCut}} = \frac{1}{2}\sum_{(u,v)\in E}(I-\zgate_u\zgate_v).
\]

\subsection{Objective function symmetries and QAOA} 

For an objective function $f$ on $n$ bits, a symmetry is a permutation of the $n$-bit strings, $a\in S_{2^n}: x\rightarrow a(x)$, that leaves the objective function unchanged; that is, $f(x) = f(a(x)), \forall x\in \{0,1\}^n$. In this paper we focus on a subgroup of permutations that are variable index permutations (i.e., permutations of $n$ elements), and we denote this subgroup $S_n$. On qubits, the symmetry $a\in S_n$ can be faithfully represented by a unitary 
\begin{equation}\label{eq:AinS_2n}
    A =\sum_{x\in\{0,1\}^n}\ket{a(x)_1\ldots a(x)_{n}}\bra{x_1\ldots x_n},
\end{equation} 
acting as $A\ket{x}=\ket{a(x)}, \forall x\in \{0,1\}^n$. 
\rev{Note that if an objective function has symmetry $a\in S_n$, the Hamiltonian faithfully representing it as specified by \eqref{eq:ham_cond} satisfies $A^\dagger HA = H$, where $A$ is given by \eqref{eq:AinS_2n}. Additionally, for any $A\in S_{2^n}$, the QAOA initial state $\ket{s}$ satisfies $A\ket{s}=\ket{s}$, and, for any $A\in S_n$, the mixing Hamiltonian $B=\sum_{j=1}^nX_j$ satisfies  $A^\dagger BA = B$.}
For a detailed discussion of classical symmetries and their relation to QAOA, the reader is referred to \cite{shaydulinsymm}.

\subsection{Graph symmetries} 

A symmetry of a graph $G=(V,E)$ is a transformation that leaves the graph unchanged. In this paper we  consider only (vertex) automorphisms, defined as permutations of nodes, $\sigma: V\rightarrow V$, that preserve the edges, that is, $(\sigma(v), \sigma(u))\in E$  if and only if  $(u,v)\in E$. \rev{For any label-independent objective function $f$ defined on $G$, an automorphism $\sigma$ of $G$ is a symmetry of $f$.} Random  three-regular and Erdos--Renyi model graphs have been shown to almost surely have no nontrivial automorphisms~\cite{kim2002asymmetry,erdHos1963asymmetric}. At the same time, many graphs of interest for the study of QAOA are highly structured, with the size of a group of automorphisms as large as $n!$~\cite{Szegedy2019qaoaenergies}.

\subsection{Fast solvers for graph symmetries}

In practice, heuristics are used for computing the group of automorphisms of a graph. A number of software implementations exist, including \texttt{nauty}~\cite{mckay1978computing,mckay1981practical} (used in this paper), \texttt{Traces}~\cite{piperno2008search,McKay201494},  \texttt{saucy}~\cite{darga2004exploiting}, \texttt{Bliss}~\cite{junttila2007engineering}, and \texttt{conauto}~\cite{lopez2014novel}. These packages are capable  of solving most graphs with up to tens of thousands of nodes in less than a second~\cite{McKay201494} and thus are sufficient for tackling problems that can realistically be expected to be relevant to noisy intermediate-scale quantum devices. \rev{Any automorphism solver could be used; in Sec.~\ref{sec:experiments} we illustrate the performance obtained on adversarial and more favorable instances by using the solver \texttt{nauty}.}

A 2015 breakthrough by Babai~\cite{babai2015giquasipoly,Babai2016} for the graph isomorphism problem leads to a quasi-polynomial algorithm for computing the group of automorphisms of a graph; no efficient algorithm is known in general. For an overview of complexity-theoretic aspects of obtaining the group of symmetries of a graph, the reader is referred to~\cite[Sec. 3.3]{shaydulinsymm}.

\section{Using symmetry to reduce the number of observables to evaluate}\label{sec:symmacceleratesqaoatraining}

In this section we show how classical symmetries of the objective function reduce the cost of evaluating QAOA energy. The central observation is that variable index permutation symmetries of the objective function induce symmetries of the subsets $S$ of qubit indices in \eqref{eq:ham_form} corresponding to observables (products of Pauli $\zgate$ operators) in the cost Hamiltonian. Since these symmetries impose classes of equivalence on the terms of the Hamiltonian and since
such symmetries are preserved by the QAOA ansatz, the QAOA energy can be computed by evaluating only the value of one observable from each equivalence class.

We now formalize this observation. Consider a permutation of variable (qubit) indices $a\in S_n$. This permutation induces an action on the subsets of qubit indices $\{j_1,\ldots, j_k\}\subset{[n]}$ by the rule $a(\{j_1,\ldots, j_k\}) = \{a(j_1),\ldots, a(j_k)\}$. 
We have the following result. 

\begin{theorem}\label{thm:observablessymm}
Consider a function $f:\{0,1\}^n\rightarrow \mathbb{R}$, its symmetry $a\in S_n$, and depth-$p$ QAOA $\ket{\vec{\beta}, \vec{\gamma}}_p = U_B(\beta_p)U_P(\gamma_p)\cdots U_B(\beta_1)U_P(\gamma_1)\ket{s}$, where $U_P(\gamma)$ is a phase separating operator with the Hamiltonian \rev{$H$} of the form \eqref{eq:ham_form} faithfully representing $f$. Consider two observables  $\prod_{j\in S_1}\zgate_j$ and $\prod_{j\in S_2}\zgate_j$ connected by $a$ in the sense that $a(S_1) = S_2$.

Then, the expectation value of the two observables in the QAOA state is equal; that is,
\[
\bra{\vec{\beta}, \vec{\gamma}}_p\prod_{j\in S_1}\zgate_j\ket{\vec{\beta}, \vec{\gamma}}_p = \bra{\vec{\beta}, \vec{\gamma}}_p\prod_{j\in S_2}\zgate_j\ket{\vec{\beta}, \vec{\gamma}}_p,
\]
for all choices of the number of QAOA steps $p$ and parameters $\vec{\beta}, \vec{\gamma}$. 
\end{theorem}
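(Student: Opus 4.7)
The plan is to push the permutation unitary $A$ (given by \eqref{eq:AinS_2n}) through the QAOA state and through the observable, and show that both are invariant in the appropriate sense. The strategy splits into three steps.

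First, I would argue that the QAOA state itself is invariant under $A$, i.e.\ $A\ket{\vec{\beta}, \vec{\gamma}}_p = \ket{\vec{\beta}, \vec{\gamma}}_p$. The excerpt already records the three ingredients needed: $A\ket{s}=\ket{s}$, $A^\dagger H A = H$ (because $a$ is a symmetry of $f$ and $H$ faithfully represents $f$), and $A^\dagger B A = B$ (because $a$ permutes qubit indices and $B=\sum_j \xgate_j$ is symmetric under this action). From $AHA^\dagger = H$ one obtains $A U_P(\gamma) A^\dagger = e^{-i\gamma A H A^\dagger} = U_P(\gamma)$, and similarly $A U_B(\beta) A^\dagger = U_B(\beta)$, so $A$ commutes with every layer of the QAOA circuit and therefore fixes the state.

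Second, I would compute the action of conjugation by $A$ on a product of $\zgate$ operators. Writing $a(x)_j = x_{a^{-1}(j)}$ (the convention implicit in \eqref{eq:AinS_2n}), a direct calculation on computational basis states yields $A^\dagger \zgate_j A = \zgate_{a^{-1}(j)}$. Extending multiplicatively,
\[
A^\dagger \Bigl(\prod_{j\in S_2}\zgate_j\Bigr) A \;=\; \prod_{j\in S_2}\zgate_{a^{-1}(j)} \;=\; \prod_{k\in a^{-1}(S_2)}\zgate_k.
\]
Since $a(S_1)=S_2$ implies $a^{-1}(S_2)=S_1$, the right-hand side equals $\prod_{j\in S_1}\zgate_j$.

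Third, I would combine the two facts:
\[
\bra{\vec{\beta},\vec{\gamma}}_p \prod_{j\in S_2}\zgate_j \ket{\vec{\beta},\vec{\gamma}}_p
= \bra{\vec{\beta},\vec{\gamma}}_p A^\dagger\Bigl(\prod_{j\in S_2}\zgate_j\Bigr) A \ket{\vec{\beta},\vec{\gamma}}_p
= \bra{\vec{\beta},\vec{\gamma}}_p \prod_{j\in S_1}\zgate_j \ket{\vec{\beta},\vec{\gamma}}_p,
\]
where the first equality inserts $A A^\dagger = I$ on the left and $A^\dagger A = I$ on the right and uses the invariance of the state from Step 1. This holds for arbitrary $p$ and $\vec{\beta}, \vec{\gamma}$.

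The only subtle point, and the main place to be careful, is Step 2: choosing the correct convention for how $a$ acts on bit positions and keeping track of the direction of the induced action on index sets, so that $a(S_1)=S_2$ produces $A^\dagger(\prod_{j\in S_2}\zgate_j)A=\prod_{j\in S_1}\zgate_j$ rather than the reverse. Once the conjugation identity $A^\dagger \zgate_j A = \zgate_{a^{-1}(j)}$ is pinned down unambiguously on basis states, the rest of the argument is essentially an application of the invariances $A^\dagger H A = H$, $A^\dagger B A = B$, and $A\ket{s}=\ket{s}$ already recorded in the background section.
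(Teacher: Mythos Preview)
Your proof is correct and follows essentially the same approach as the paper: establish $A\ket{\vec{\beta},\vec{\gamma}}_p=\ket{\vec{\beta},\vec{\gamma}}_p$ from $[A,H]=[A,B]=0$ and $A\ket{s}=\ket{s}$, then use the conjugation relation between the two observables to conclude. Your treatment is in fact more careful about the direction of the conjugation identity $A^\dagger\zgate_j A=\zgate_{a^{-1}(j)}$ than the paper's, which simply asserts $A^\dagger\prod_{j\in S_1}\zgate_j A=\prod_{j\in S_2}\zgate_j$ without tracking the convention; either orientation yields the same conclusion, so this is only a cosmetic difference.
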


\begin{proof}
Let $A\in S_{2^n}$ be a unitary faithfully representing $a$ as defined by \eqref{eq:AinS_2n}. If the observables $\prod_{j\in S_1}\zgate_j$ and $\prod_{j\in S_2}\zgate_j$ are connected by $a$, then $A^\dagger\prod_{j\in S_1}\zgate_jA = \prod_{j\in S_2}\zgate_j$. 

\rev{Operator $A$ faithfully represents a symmetry of the objective function $f$ and therefore $[A,H]=0$, which implies $[A,U_P(\gamma)] =0, \forall\gamma$. Moreover, $A$ represents a permutation of qubits $a\in S_n$ and therefore $[A,B]=0$, which implies $[A,U_B(\beta)] =0, \forall\beta$. Combining these two observations gives
\begin{align*}
A\ket{\vec{\beta}, \vec{\gamma}}_p & = AU_B(\beta_p)U_P(\gamma_p)\cdots U_B(\beta_1)U_P(\gamma_1)\ket{s} \\
& = U_B(\beta_p)U_P(\gamma_p)\cdots U_B(\beta_1)U_P(\gamma_1)A\ket{s} \\
& = \ket{\vec{\beta}, \vec{\gamma}}_p,
\end{align*}
where the last equality is from the QAOA initial state being invariant under any permutation of computational basis states: $P\ket{s}=\ket{s}, \forall P\in S_{2^n}$. This} completes the proof.
\end{proof}

The algorithm for computing the QAOA energy is as follows. Suppose we hold a group of classical symmetries of the objective function (in Sec.~\ref{sec:graphaut4symmtopsection} we show how such groups can be heuristically obtained for many relevant problem classes). Begin by computing classes of equivalence $\{\tilde{S}_m\}_{m=1}^{M}$ on subsets of qubit (variable) indices corresponding to the terms of the Hamiltonian. If we denote an (arbitrarily chosen) representative of the equivalence class by $s_m \in \tilde{S}_m$, the total energy can be computed by evaluating $M$ terms as follows:
\[
E_p(\vec{\beta}, \vec{\gamma}) = \sum_{m=1}^{M}|\tilde{S}_m| \bra{\vec{\beta}, \vec{\gamma}}_p\prod_{j\in s_m}\zgate_j\ket{\vec{\beta}, \vec{\gamma}}_p.
\]

Note that although in this paper we focus on the case where the value $\bra{\vec{\beta}, \vec{\gamma}}_p\prod_{j\in s_m}\zgate_j\ket{\vec{\beta}, \vec{\gamma}}_p$ is computed classically, the proposed method is applicable irrespective of how the value of the observable is evaluated.

\begin{remark}
Although Theorem~\ref{thm:observablessymm} is formulated for QAOA and for observables that are products of the Pauli $\zgate$, it can be easily generalized by noting that the relevant conditions are (i) invariance of the ansatz under symmetry (i.e., $A\ket{\psi(\theta)} = \ket{\psi(\theta)}$) and (ii) observables satisfying the relation $A^{\dagger}\prod_{j\in S_1}P_jA = \prod_{j\in S_2}P_j$, where $P_j\in \{\xgate_j,\ygate_j,\zgate_j\}$ is a single-qubit Pauli acting on qubit $j$.  
\end{remark}
An interesting future direction is exploring the applications of such generalizations to reducing the resource requirements of the variational quantum eigensolver~\cite{peruzzo2014variational-rs}.

\begin{figure*}[tbh]
    \centering
    \includegraphics[width=0.85\textwidth]{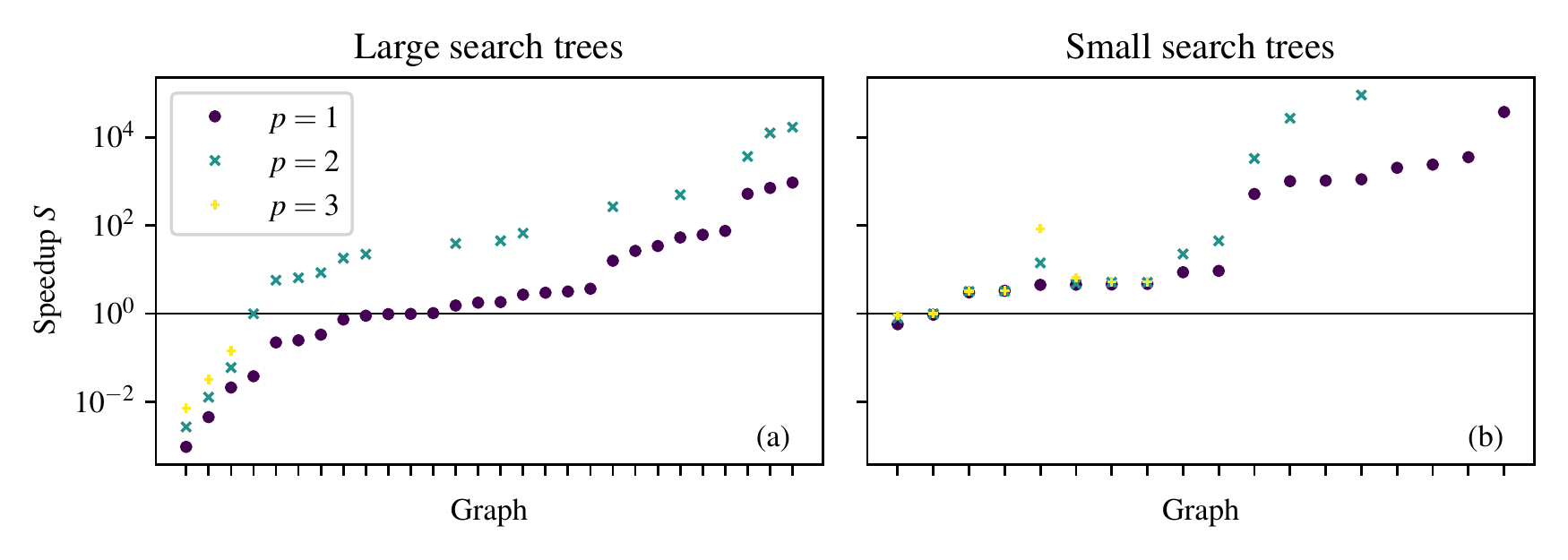}
    \caption{Speedup obtained by leveraging the symmetries of the graph for the graphs with large search trees (a) and small search trees (b). The horizontal axis is the graphs in the benchmark set of \cite{piperno2008search}, sorted in order of increasing speedup. For some graphs (indicated with a missing marker for $p=2$ or $p=3$), the energy could not be computed given the chosen resource constraints. For $p=1$, the improvement is observed for $89\%$ of the graphs with a small search tree and \rev{$61\%$} of the graphs with a large search tree. For the detailed results, see Tables~\ref{tab:dflarge} and  \ref{tab:dfsmall}.}
    \label{fig:speedup}
\end{figure*}

\section{Accelerating QAOA training by using fast graph automorphism solvers} \label{sec:graphaut4symmtopsection}
To apply the findings of Sec.~\ref{sec:symmacceleratesqaoatraining} in practice, one needs a way of obtaining the group of symmetries of the classical objective function. We now outline how heuristic graph automorphism solvers can be used for this purpose, both for problems defined on unweighted graphs and in more general cases.

The first class of problems we consider are problems defined on unweighted graphs. Graph problems are of particular interest to the study and near-term applications of QAOA and have been studied extensively in recent years. Problems considered  include MaxCut~\cite{braviyobstacles,crooks2018performance,zhou2018qaoaperformance}, Maximum-$k$-Cut~\cite{bravyihybrid}, Maximum Independent Set~\cite{farhi2020qaoaneedsfullgraphtypical,farhi2020qaoaneedsfullgraphworst}, Community Detection~\cite{shaydulin2018community-rs,Shaydulin2019MultistartDOI,Shaydulin2019network}, Graph Vertex $k$-Coloring~\cite{do2020planning}, Maximum $k$-Colorable Subgraph~\cite{Wang2020},  Graph Partitioning~\cite{ushijima2019multilevel}, and many more~\cite{hadfield2017quantum,hadfield2018quantum,shaydulin2019hybrid}. The combination of hardness and sparsity make graph problems especially appealing as an early application of QAOA, as evidenced by the fact that a number of recent experimental demonstrations apply QAOA to graph problems~\cite{googleqaoaexperimental,Lacroix2020}. For problems defined explicitly on unweighted graphs, the group of automorphisms of the graph is a subgroup of the group of symmetries of the problem.

A second class of problems are problems with objective functions of the form
\begin{equation}\label{eq:ising_ham}
f(z) = \sum_{(i,j)\in E}J_{i,j}z_iz_j, \quad z_i\in \{-1,1\}.    
\end{equation}
An important problem class of this form is the  Sherrington--Kirkpatrick spin model. In this case, classical symmetries of the objective function can be found by considering a graph $G=(V,E)$ with each edge in $E$ assigned a color in such a way that two edges $(i,j)$ and $(u,v)$ have the same color if and only if $J_{i,j} = J_{u,v}$. Although standard tools such as \texttt{nauty} and \texttt{Traces} cannot handle graphs with colored edges directly, the problem can be converted to an equivalent one where the colors are assigned to vertices instead of edges~\cite{nautytracesmanual}. In many cases, the coefficients $J_{i,j}$ have discrete support, such as $J_{i,j}\in \{-1,1\}$, and thus the problem can have nontrivial amounts of symmetry. On the other hand, if coefficients come from a continuous distribution, such as a normal distribution truncated to some interval, the problem is expected to have no classical symmetries.

In this work we do not consider problems with higher-order terms. The symmetries of such problems can be found using graph automorphism solvers by considering an appropriately constructed bipartite graph. Because of the overhead of doing so and the lack of readily apparent applications,  we do not examine problems with terms beyond quadratic.

\subsection{Application to RQAOA}

In this subsection, we briefly describe how the proposed approach can be applied to accelerating the recursive quantum approximate optimization algorithm (RQAOA). RQAOA has been proposed in \cite{braviyobstacles,braviyobstaclesarxiv} and aims to overcome the limitations of QAOA that arise from its locality. We begin by outlining the algorithm and then show how our findings can be used to improve its performance in practice. For detailed discussion of RQAOA the reader is referred to \rev{\cite{braviyobstacles,braviyobstaclesarxiv,bravyihybrid}; pseudocode for RQAOA is given in \cite[Appendix C.3]{braviyobstaclesarxiv}.}

Consider an objective of the form \ref{eq:ising_ham} and the corresponding Hamiltonian $H_n = \sum_{(i,j)\in E}J_{i,j}\zgate_i\zgate_j$, where $\zgate_j$ is the Pauli $\zgate$ on qubit $j$. RQAOA begins by finding parameters that maximize the standard QAOA energy $\bra{\vec{\beta}, \vec{\gamma}}_p H_n\ket{\vec{\beta}, \vec{\gamma}}_p$. Then for every term in the Hamiltonian the value $E_p(\vec{\beta}, \vec{\gamma}, (i,j)) = \bra{\vec{\beta}^*, \vec{\gamma}^*}_p \zgate_i\zgate_j\ket{\vec{\beta}^*, \vec{\gamma}^*}_p$ is computed, where $\vec{\beta}^*, \vec{\gamma}^*$ are the optimized parameters. 

Then, the term such that $E_p(\vec{\beta}, \vec{\gamma}, (i,j))$ has the largest magnitude is chosen and variable $z_j$ is eliminated by imposing constraint $z_j = \textrm{sgn}(E_p(\vec{\beta}, \vec{\gamma}, (i,j))z_i$. The process is repeated until the number of variables becomes small enough for the resulting optimization problem to be solved by a classical algorithm or brute force.

Our findings can be applied to RQAOA in two ways. First, at each step the repeated evaluations of QAOA energy needed for finding QAOA parameters can be accelerated by leveraging symmetries as outline above. Second, out of all terms $E_p(\vec{\beta}, \vec{\gamma}, (i,j)), (i,j) \in E$ only those that are not connected by symmetry need to be independently evaluated.

\section{Numerical Experiments}\label{sec:experiments}

We implement the proposed approach on the MaxCut problem using the state-of-the-art tensor network simulator QTensor~\cite{yuriQAOAsim,qtensor} for evaluating QAOA energies and the \texttt{nauty}~\cite{McKay201494} solver for computing the group of automorphisms of a graph. We do not leverage any properties of the MaxCut problem except the fact that it is defined on unweighted graphs, and we expect similar results if our approach is applied to other graph problems. We release the complete dataset and the source code at \cite{codeanddata}.

To illustrate the potential as well as the limitations of the proposed approach, we consider the set of graphs used for benchmarking graph automorphism solvers in \cite{piperno2008search}. The set contains graphs with up to 10,000 nodes and up to 120,050 edges. Following \cite{piperno2008search}, we consider two groups of graphs: graphs with large search trees (Table~\ref{tab:dflarge}, typically hard for graph automorphism solvers) and graphs with small search trees (Table~\ref{tab:dfsmall}, typically easy). We execute \texttt{nauty} in sparse mode via the \texttt{dreadnaut} interface and allow it 72 hours to complete. We note that the observed running times of \texttt{nauty} are slower than those reported in \cite{piperno2008search} and can be further improved by careful tuning, which we purposefully avoided.

To evaluate QAOA energies, we run QTensor  either on each term of the problem Hamiltonian ($t_\textrm{s}$) or only on one term from each class of equivalence computed from the symmetries of the problem ($t_\textrm{acc}$). We limit the total running time to 30 hours and 62 GB of RAM, and we exclude the graphs for which the energy cannot be computed within these constraints even for $p=1$ (typically, the computation of QAOA energy is memory bound). Thus we are left with 30 graphs with large search trees and 18 graphs with small search trees.

\begin{figure}[tb]
    \centering
    \includegraphics[width=0.9\linewidth]{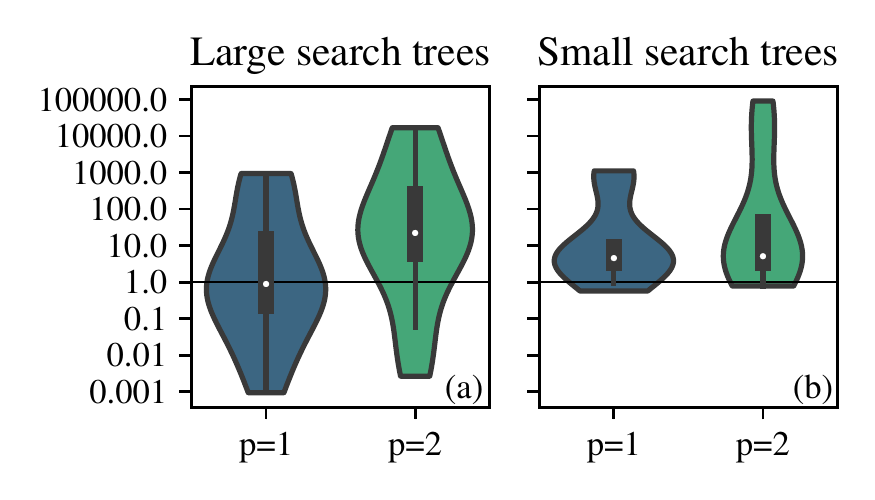}
    \caption{Violin plot of speedup obtained by leveraging the symmetries of the graph for the graphs with large search trees (a) and small search trees (b) for $p=1,2$. We exclude graphs for which energy calculation for $p=2$ could not be completed. \rev{The lack of speedup on some instances is due to the cost of computing the group of automorphisms outweighing the savings in QAOA energy computation from leveraging the computed symmetries.} We observe that the speedup is higher for $p=2$ than it is for $p=1$ for all graphs in our dataset with the exception of \rev{one graph (\texttt{cfi-20}) with a small search tree}. For \rev{this graph} the times to compute the group of automorphism and the energy were low, resulting in small variations in running times and leading to a slight decrease in the speedup (see corresponding row of Table~\ref{tab:dfsmall} for  detailed information).}
    \label{fig:p1vsp2}
\end{figure}

In Fig.~\ref{fig:speedup} we present the speedup achieved by leveraging the symmetries. For the graphs with large search trees (i.e., graphs that are typically hard for automorphism solvers) for $p=1$ we observe an improvement in \rev{17} out of 28 graphs (\rev{$61\%$}), with an overall median speedup of \rev{$1.79$}. For the graphs with small search trees (typically easy for automorphism solvers) for $p=1$ we observe an improvement in 16 out of 18 graphs ($89\%$), with a median speedup of \rev{$8.96$}. For $p>1$ the observed speedup is larger because the fixed cost of computing the group of automorphisms is amortized by the increasing cost of computing the QAOA energy. In this sense, the results for $p=1$ and graphs with large search trees provide a ``worst-case scenario'' for the proposed methods; choosing different graphs and higher depths is expected to lead to higher speedups. This effect is illustrated in Fig.~\ref{fig:p1vsp2}. We do not list statistics such as median speedup for $p>1$ since for many graphs the energy could not be computed given the limit of 62 GB of RAM. The complete results are presented in Tables~\ref{tab:dflarge} and \ref{tab:dfsmall}. 

\rev{For $p=2$ (and graphs \texttt{ag2-49}, \texttt{pg2-32}, \texttt{pp-27-1}) and for $p=1$ (and graph \texttt{had-232}), the standard approach did not complete within 20 hours. For these instance the running time $t_s$ for the standard approach is estimated from actual running time $t_{\text{actual}}\approx 20$h and the number of processed Hamiltonian terms $N_t$ by assuming that the average cost of processed, randomly selected terms is equal to the average cost of processing all Hamiltonian terms. This gives an estimate of the running time $t_s = \frac{t_{\text{actual}} |E|}{N_t}$.}

Note that the running times and the speedup we report are for computing the QAOA energy \emph{once}. In practice, the symmetries would be computed once and be used to evaluate the QAOA energy repeatedly (e.g., in the process of optimizing QAOA parameters). Accounting for such repeated evaluation would dramatically improve the speedup: in the case of the graph with the smallest speedup (largest slowdown), \texttt{mz-aug2-22}, amortizing the cost of computing the automorphisms over 5,000 energy evaluations results in a speedup of $1.37$, rather than the reported slowdown of \rev{$\approx 1061$} observed if the energy is evaluated only once.

\begin{table*}
\centering{\scriptsize
\begin{tabular}{lrrrrrrrrrrrrr}
\toprule
\multirow{2}{*}{Name} &     \multirow{2}{*}{$|E|$} &    \multirow{2}{*}{$|V|$} & \multirow{2}{*}{$N_\textrm{orb}$} & \multirow{2}{*}{$t_\textrm{aut}$} &  \multicolumn{3}{c}{$p=1$}  &  \multicolumn{3}{c}{$p=2$}  & \multicolumn{3}{c}{$p=3$}  \\
\cmidrule(lr){6-8} \cmidrule(lr){9-11} \cmidrule(lr){12-14}
     &           &          &        &           & $t_\textrm{s}$        & $t_\textrm{acc}$  &  $S$       &         $t_\textrm{s}$ &            $t_\textrm{acc}$ & $S$ &  $t_\textrm{s}$  & $t_\textrm{acc}$        &  $S$    \\
\midrule
mz-aug2-22   &     836 &   528 &       419 &           9759.90 &            9.20 &                 4.76 &      0.0009 &           26.09 &                13.48 &      0.0027 &           69.47 &                33.39 &      0.0071 \\
mz-aug2-20   &     760 &   480 &       381 &           1884.47 &            8.41 &                 4.32 &      0.0045 &           23.96 &                12.23 &        0.01 &           61.00 &                30.08 &        0.03 \\
mz-aug2-18   &     684 &   432 &       343 &            354.15 &            7.49 &                 3.93 &        0.02 &           21.67 &                11.07 &        0.06 &           54.61 &                27.08 &        0.14 \\
pp-16-19     &    4641 &   546 &        15 &           6300.20 &          237.46 &                 0.74 &        0.04 &         6226.05 &                20.70 &        0.98 &             nan &                  nan &         nan \\
pp-16-15     &    4641 &   546 &        25 &           1072.21 &          236.93 &                 1.24 &        0.22 &         6302.65 &                34.46 &        5.70 &             nan &                  nan &         nan \\
pp-16-8      &    4641 &   546 &         6 &            944.93 &          233.34 &                 0.30 &        0.25 &         6165.44 &                 8.24 &        6.47 &             nan &                  nan &         nan \\
pp-16-17     &    4641 &   546 &        20 &            719.64 &          238.60 &                 1.00 &        0.33 &         6300.98 &                27.38 &        8.43 &             nan &                  nan &         nan \\
pp-16-11     &    4641 &   546 &         9 &            329.76 &          241.75 &                 0.45 &        0.73 &         6172.57 &                12.33 &       18.04 &             nan &                  nan &         nan \\
pp-16-21     &    4641 &   546 &         8 &            271.77 &          242.15 &                 0.41 &        0.89 &         6273.83 &                11.00 &       22.19 &             nan &                  nan &         nan \\
sts-sw-79-11 &   58539 &  1027 &     58539 &             20.13 &        23826.97 &             24552.61 &        0.97 &             nan &                  nan &         nan &             nan &                  nan &         nan \\
sts-sw-21-10 &     945 &    70 &       945 &              0.03 &           68.45 &                69.78 &        0.98 &             nan &                  nan &         nan &             nan &                  nan &         nan \\
sts-sw-55-1  &   19305 &   495 &     19305 &              2.99 &         4894.36 &              4791.97 &        1.02 &             nan &                  nan &         nan &             nan &                  nan &         nan \\
pp-16-4      &    4641 &   546 &         6 &            154.23 &          234.63 &                 0.30 &        1.52 &         6315.73 &                 8.92 &       38.71 &             nan &                  nan &         nan \\
had-sw-112   &   25312 &   448 &     12768 &            817.91 &        12607.74 &              6324.41 &        1.77 &             nan &                  nan &         nan &             nan &                  nan &         nan \\
pp-16-9      &    4641 &   546 &         6 &            130.35 &          237.46 &                 0.29 &        1.82 &         6196.43 &                 8.31 &       44.69 &             nan &                  nan &         nan \\
pp-16-2      &    4641 &   546 &         4 &             88.16 &          236.78 &                 0.20 &        2.68 &         6227.34 &                 5.58 &       66.43 &             nan &                  nan &         nan \\
sts-67       &   35376 &   737 &     11792 &             22.89 &        11634.51 &              3902.93 &        2.96 &             nan &                  nan &         nan &             nan &                  nan &         nan \\
had-sw-88    &   15664 &   352 &      4004 &            354.12 &         5718.32 &              1455.20 &        3.16 &             nan &                  nan &         nan &             nan &                  nan &         nan \\
had-sw-32-1  &    2112 &   128 &       554 &              1.90 &          228.59 &                60.74 &        3.65 &             nan &                  nan &         nan &             nan &                  nan &         nan \\
pp-16-6      &    4641 &   546 &         6 &             15.33 &          246.42 &                 0.30 &       15.77 &         6241.36 &                 8.20 &      265.24 &             nan &                  nan &         nan \\
had-232      &  108112 &   928 &        62 &           6081.99 &        1.63e+05 &                96.04 &       26.37 &             nan &                  nan &         nan &             nan &                  nan &         nan \\
had-184      &   68080 &   736 &        50 &           1990.91 &        69607.04 &                53.18 &       34.05 &             nan &                  nan &         nan &             nan &                  nan &         nan \\
pp-16-7      &    4641 &   546 &         6 &              4.17 &          237.38 &                 0.30 &       53.15 &         6205.46 &                 8.32 &      496.89 &             nan &                  nan &         nan \\
had-100      &   20200 &   400 &        54 &            117.83 &         8619.30 &                23.14 &       61.14 &             nan &                  nan &         nan &             nan &                  nan &         nan \\
had-52       &    5512 &   208 &        30 &              7.71 &         1002.98 &                 5.69 &       74.81 &             nan &                  nan &         nan &             nan &                  nan &         nan \\
pp-16-1      &    4641 &   546 &         1 &              0.40 &          232.06 &                 0.05 &      520.04 &         6261.13 &                 1.31 &     3660.18 &             nan &                  nan &         nan \\
pp-25-1      &   16926 &  1302 &         1 &              1.80 &         1331.24 &                 0.08 &      707.23 &        86841.64 &                 5.14 &    12503.95 &             nan &                  nan &         nan \\
pp-27-1      &   21196 &  1514 &         1 &              1.86 &         1818.38 &                 0.08 &      935.98 &        1.48e+05 &                 6.91 &    16872.39 &             nan &                  nan &         nan \\
mz-aug2-30   &    1140 &   720 &       nan &               nan &           12.53 &                  nan &         nan &           35.68 &                  nan &         nan &           91.35 &                  nan &         nan \\
mz-aug2-50   &    1900 &  1200 &       nan &               nan &           20.81 &                  nan &         nan &           58.89 &                  nan &         nan &          151.95 &                  nan &         nan \\
\bottomrule
\end{tabular}}
\caption{Benchmark graphs with large search trees and thus hard for graph automorphism solvers. $N_\textrm{orb}$ is the number of classes of equivalence on terms of the cost Hamiltonian, which is equal to the number of edge orbits of the graph. \rev{The smaller the value of $N_\textrm{orb}$, the more symmetric the graph is. For $N_\textrm{orb}=1$, the expected speedup is $O(|E|)$, since we only need to evaluate one term instead of $|E|$.} $t_\textrm{aut}$ is the time (in seconds) to compute the edge orbits of the graph using \texttt{nauty}. $t_\textrm{s}$ is the time (in seconds) to compute the QAOA energy using the standard approach. $t_\textrm{acc}$ is the time (in seconds) to compute the QAOA energy by leveraging the information about edge orbits. $S=\frac{t_\textrm{s}}{t_\textrm{aut}+t_\textrm{acc}}$ is the speedup (larger is better; $S<1$ corresponds to slowdown). ``nan'' denotes cases where the energy computation could not be completed under chosen resource limits. Note that the speedup increases with $p$ because computing the group of symmetries of a graph is a fixed cost, which is offset by the increased cost of evaluating the QAOA energy. For two graphs, mz-aug2-30 and mz-aug2-50, it was not possible to compute the group of automorphisms within the chosen time limit (72 hours), even though they are small enough for the QAOA energy to be easily computable.}
\label{tab:dflarge}
\end{table*}

\begin{table*}
\centering{\scriptsize
\begin{tabular}{lrrrrrrrrrrrrr}
\toprule
\multirow{2}{*}{Name} &     \multirow{2}{*}{$|E|$} &    \multirow{2}{*}{$|V|$} & \multirow{2}{*}{$N_\textrm{orb}$} & \multirow{2}{*}{$t_\textrm{aut}$} &  \multicolumn{3}{c}{$p=1$}  &  \multicolumn{3}{c}{$p=2$}  & \multicolumn{3}{c}{$p=3$}  \\
\cmidrule(lr){6-8} \cmidrule(lr){9-11} \cmidrule(lr){12-14}
     &           &          &        &           & $t_\textrm{s}$        & $t_\textrm{acc}$  &  $S$       &         $t_\textrm{s}$ &            $t_\textrm{acc}$ & $S$ &  $t_\textrm{s}$  & $t_\textrm{acc}$        &  $S$    \\
\midrule
rnd-3-reg-10000-1 &   15000 &  10000 &     15000 &            113.65 &          154.55 &               156.91 &        0.57 &          421.05 &               423.59 &        0.78 &         1050.73 &              1036.18 &        0.91 \\
rnd-3-reg-3000-1  &    4500 &   3000 &      4500 &              3.38 &           46.15 &                45.89 &        0.94 &          126.29 &               124.46 &        0.99 &          313.86 &               310.65 &        1.00 \\
cfi-80            &    1200 &    800 &       360 &              0.28 &           12.32 &                 3.82 &        3.01 &           33.74 &                10.34 &        3.18 &           80.60 &                24.76 &        3.22 \\
cfi-20            &     300 &    200 &        90 &              0.04 &            3.16 &                 0.93 &        3.28 &            8.36 &                 2.56 &        3.22 &           20.30 &                 6.18 &        3.26 \\
grid-w-2-100      &   20000 &  10000 &         1 &             58.70 &          262.37 &                 0.02 &        4.47 &          827.07 &                 0.06 &       14.08 &         4960.41 &                 0.41 &       83.92 \\
mz-aug-22         &    1012 &    440 &       210 &              0.13 &           14.51 &                 3.09 &        4.51 &           41.53 &                 8.63 &        4.74 &          561.60 &                86.05 &        6.52 \\
mz-50             &    1500 &   1000 &       276 &              0.48 &           15.36 &                 2.89 &        4.56 &           42.16 &                 7.76 &        5.11 &          101.13 &                18.79 &        5.25 \\
mz-18             &     540 &    360 &       100 &              0.08 &            5.63 &                 1.12 &        4.68 &           14.86 &                 2.84 &        5.09 &           36.37 &                 6.85 &        5.25 \\
grid-3-20         &   22800 &   8000 &       550 &             36.96 &          406.37 &                10.04 &        8.65 &         1868.89 &                46.24 &       22.46 &             nan &                  nan &         nan \\
grid-w-3-20       &   24000 &   8000 &         1 &             47.57 &          441.59 &                 0.02 &        9.28 &         2124.36 &                 0.09 &       44.57 &             nan &                  nan &         nan \\
ag2-16            &    4352 &    528 &         1 &              0.37 &          215.36 &                 0.05 &      515.98 &         5448.04 &                 1.29 &     3290.06 &             nan &                  nan &         nan \\
pg2-32            &   34881 &   2114 &         1 &              3.45 &         3572.95 &                 0.10 &     1006.90 &        4.22e+05 &                12.16 &    27009.93 &             nan &                  nan &         nan \\
k-70              &    2415 &     70 &         1 &              0.14 &          399.55 &                 0.24 &     1036.67 &             nan &                  nan &         nan &             nan &                  nan &         nan \\
ag2-49            &  120050 &   4851 &         1 &             17.50 &        19600.24 &                 0.16 &     1109.76 &        8.05e+06 &                71.44 &    90539.60 &             nan &                  nan &         nan \\
k-100             &    4950 &    100 &         1 &              0.39 &         1306.11 &                 0.25 &     2024.96 &             nan &                  nan &         nan &             nan &                  nan &         nan \\
lattice-30        &   26100 &    900 &         1 &              1.63 &         4309.46 &                 0.17 &     2394.90 &             nan &                  nan &         nan &             nan &                  nan &         nan \\
latin-30          &   39150 &    900 &         7 &              1.09 &        11278.93 &                 2.11 &     3524.22 &             nan &                  nan &         nan &             nan &                  nan &         nan \\
paley-461         &   53015 &    461 &         1 &              0.41 &        64749.50 &                 1.32 &    37410.84 &             nan &                  nan &         nan &             nan &                  nan &         nan \\
\bottomrule
\end{tabular}}
\caption{Benchmark graphs with small search trees and thus easy for graph automorphism solvers. See the caption of Table~\ref{tab:dflarge} for definitions of column headers.}
\label{tab:dfsmall}
\end{table*}

\section{Discussion}\label{sec:discussion}

In this paper we presented an approach for accelerating the computation of the QAOA energy by leveraging fast graph symmetry solvers. We observed a median speedup of $1.81$ in the worst-case scenario of graphs specifically selected for being hard for automorphism solvers and $p=1$. Speedups are expected to be orders of magnitude larger in realistic settings where the QAOA energy is evaluated repeatedly, a situation that arises, for example, when seeking optimal QAOA parameters. 
When a slowdown---corresponding to situations where the automorphism calculation required more time than was saved by exploiting any found symmetry---was observed, the effect was generally far less extreme than the benefit encountered in other cases.  
In practice, we envisage that the automorphism solvers would be run with some small time limit (e.g., 10 minutes), with the QAOA energy computed the standard way if the solver did not successfully finish within this limit.

An important consideration for applicability of the proposed approach is the expected amount of symmetry in the problem, which is often known a priori. If the problem considered is defined on a random  three-regular or Erdos--Renyi model graph, then it is expected to have no symmetry~\cite{kim2002asymmetry,erdHos1963asymmetric}, and the step of computing symmetries can be omitted. At the same time, if the problem is defined on a structured graph, a significant amount of symmetry may be expected, and the proposed approach should be applied. As illustrated here, applying such an approach has little downside relative to the speedups it typically produces.

\rev{A sufficiently large symmetry group may make the problem classically trivial, which is the case for the complete graph $K_n$. On the other hand, in most cases taking advantage of symmetry to solve the problem is highly nontrivial, and symmetry may in fact make the problem harder to solve in practice. For example, in integer programming, symmetry is known to make problems extremely difficult for branch-and-bound solvers. There have been efforts to use information about the symmetry group to reduce the size of the search tree in branch-and-bound methods~\cite{ostrowski2011orbital,Margot2009}. However, these typically address only a subset of symmetries present in the problem and still lead to nontrivial running times for classical solvers even on moderately sized instances.}

An important future direction is focusing on local, rather than global, symmetries, as well as symmetries that are approximate, rather than exact. A step in that direction has been made in \cite{shaydulinsymm}. Fully exploiting the locality of QAOA and its interplay with local symmetries in the problem is expected to lead to significant further gains.

\section{Acknowledgments}

We thank Danylo Lykov for help with QTensor and Stuart Hadfield for the feedback on the manuscript. 
We are grateful to anonymous referees whose suggestions improved the presentation. 
This work was supported in part by the U.S.\ Department of Energy (DOE), Office of Science, Office of Advanced Scientific Computing Research AIDE-QC and FAR-QC projects and by the Argonne LDRD program under contract number DE-AC02-06CH11357. Clemson University is acknowledged for generous allotment of compute time on the Palmetto cluster.

\bibliographystyle{myIEEEtran}
\bibliography{references}

\newpage

\small

\framebox{\parbox{\textwidth}{
The submitted manuscript has been created by UChicago Argonne, LLC, Operator of 
Argonne National Laboratory (``Argonne''). Argonne, a U.S.\ Department of 
Energy Office of Science laboratory, is operated under Contract No.\ 
DE-AC02-06CH11357. 
The U.S.\ Government retains for itself, and others acting on its behalf, a 
paid-up nonexclusive, irrevocable worldwide license in said article to 
reproduce, prepare derivative works, distribute copies to the public, and 
perform publicly and display publicly, by or on behalf of the Government.  The 
Department of Energy will provide public access to these results of federally 
sponsored research in accordance with the DOE Public Access Plan. 
http://energy.gov/downloads/doe-public-access-plan.}}

\end{document}